\theoremstyle{plain}\newtheorem{theorem}{Theorem}[section]
\theoremstyle{plain}\newtheorem{corollary}[theorem]{Corollary}
\theoremstyle{plain}
\theoremstyle{plain}
\theoremstyle{plain}\newtheorem{definition}[theorem]{Definition}
\theoremstyle{remark}
\theoremstyle{remark}
\tikzset{
	place/.style={
		circle,
		thick,
		draw=blue!75,
		fill=blue!20,
		minimum size=6mm
	},
	transition/.style={
		rectangle,
		thick,
		fill=black,
		minimum width=8mm,
		inner ysep=2pt
	}
}
\newcommand{\superimpose}[2]{%
  {\ooalign{$#1\@firstoftwo#2$\cr\hfil$#1\@secondoftwo#2$\hfil\cr}}}
	\newcommand{\footnoteref}[1]{%
		\protected@xdef\@thefnmark{\ref{#1}}\@footnotemark%
	}
\newcommand{\customlabel}[2]{%
	\protected@write \@auxout {}{\string \newlabel {#1}{{#2}{\thepage}{#2}{#1}{}} }%
	\hypertarget{#1}{#2}
}
\newcommand{\rulelabel}[2][\theqatuation]{
	{\text{\scriptsize[\customlabel{#2}{\textsc{#1}}\!\!]}}\hfill
}
\newcommand{\ruleref}[1]{\eqref{#1}}
\newcommand{\rulefrac}[3][]{%
#1\;%
\begin{array}{c}
	#2
	\\[0pt] \hline \\[-10pt]
	#3
\end{array}}
\newcommand{\defeq}{\triangleq}
\DeclareMathOperator{\dom}{dom}
\newcommand{\textcode}[1]{\textnormal{\ttfamily#1}}
\newcommand{\type}[1]{\textcode{#1}\xspace}
\newcommand{\libOTM}{\textnormal{\textsf{O\kern-1ptT\kern-.5ptM}}\xspace}
\newcommand{\libSTM}{\textnormal{\textsf{S\kern-.5ptT\kern-.5ptM}}\xspace}
\newcommand{\lthrparenthesis}{(\mspace{-3.7mu}[}
\newcommand{\rthrparenthesis}{]\mspace{-3.7mu})}
\newcommand{\ctmthread}[2][t]{\lthrparenthesis#2\rthrparenthesis_{#1}}
\newcommand{\ctmthreadtr}[4]{\lthrparenthesis{#1;#2}\rthrparenthesis_{#3,#4}}
\newcommand{\ctxhole}[1][-]{[#1]}
\newcommand{\ctxP}[2][t]{\mathbb{P}_{#1}\ctxhole[#2]}
\newcommand{\ctxT}[2][t,k]{\mathbb{T}_{#1}\ctxhole[#2]}
\newcommand{\Loc}{\textsf{Loc}}
\newcommand{\Var}{\textsf{Var}}
\newcommand{\Term}{\textsf{Term}}
\newcommand{\TrName}{\textsf{TrName}}
\newcommand{\State}{\textsf{State}}
\newcommand{\ctmSt}[2]{\langle #2; #1\rangle}
\title{A Specification of Open Transactional Memory for Haskell}
\author{
	\begin{tabular}{ccc}
	Marino Miculan\thanks{Partially supported by MIUR project 2010LHT4KM (\emph{CINA}).}&\quad& Marco  Peressotti\\
	\small\href{mailto:marino.miculan@uniud.it}{\tt marino.miculan@uniud.it}
	&&
	\small\href{mailto:marco.peressotti@uniud.it}{\tt marco.peressotti@uniud.it}
	\end{tabular}\\[8pt]
	\small	Laboratory of Models and Applications of Distributed Systems \\[-.8ex]
	\small	Department of Mathematics, Informatics, and Physics\\[-.8ex]
	\small	University of Udine, Italy\\
}
\date{}
\begin{document}

\maketitle


\begin{abstract}
	Transactional memory (TM) has emerged as a promising abstraction for concurrent programming alternative to lock-based synchronizations.  However, most TM models admit only \emph{isolated} transactions, which are not adequate in multi-threaded programming where transactions have to interact via shared data \emph{before} committing.
	In this paper, we present \emph{Open Transactional Memory} (OTM), a programming abstraction supporting \emph{safe, data-driven} interactions between \emph{composable} memory transactions.
	This is achieved by relaxing isolation between transactions, still ensuring atomicity: threads of different transactions can interact by accessing shared variables, but then their transactions have to commit together---actually, these transactions are transparently \emph{merged}.
	This model allows for \emph{loosely-coupled} interactions since transaction merging is driven only by accesses to shared data, with no need to specify participants beforehand.
	In this paper we provide a specification of the OTM in the setting of Concurrent Haskell, showing that it is a conservative extension of current STM abstraction. In particular, we provide a formal semantics, which allows us to prove that OTM satisfies the \emph{opacity} criterion.
\end{abstract}


\section{Introduction}
\label{sec:introduction}

\looseness=-1
The advent of multicore architectures has emphasized the importance of abstractions supporting correct and scalable multi-threaded programming.   In this model, threads can collaborate by interacting on data structures (such as tables, message queues, buffers, etc.) kept in shared memory. 
Traditional lock-based mechanisms (like semaphores and monitors) 
used to regulate access to these shared data
are notoriously difficult and error-prone, as they easily lead to deadlocks, race conditions and priority inversions; moreover, they are not composable and hinder parallelism, thus reducing efficiency and scalability.
\emph{Transactional memory} (TM) has emerged as a promising abstraction to replace locks \cite{moss:transactionalmemorybook,st:dc1997}.  The basic idea is to mark blocks of code as \emph{atomic}; then, execution of each block will appear either as if it was executed sequentially and instantaneously at some unique point in time, or, if aborted, as if it did not execute at all. This is obtained by means of \emph{optimistic} executions: the blocks are allowed to run concurrently, and eventually if an interference is detected a transaction is restarted and its effects are rolled back.  Thus, each transaction can be viewed in isolation as a \emph{single-threaded} computation, significantly reducing the programmer's burden. Moreover, transactions are composable and ensure absence of deadlocks and  priority inversions, automatic roll-back on exceptions, and increased concurrency. 

However, in multi-threaded programming different transactions may need to interact and exchange data \emph{before} committing. 
In this situation, transaction isolation is a severe shortcoming.  A simple example is a request-response interaction between two transactions via a shared buffer, like in a master/worker situation.  We could try to synchronize the threads accessing the buffer \type{b} by means of two semaphores \verb|c1|, \verb|c2| as follows: 
\\[1ex]
\begin{minipage}[t]{.45\textwidth}
\begin{BVerbatim}[baseline=t]
// Party1 (Master)
atomically {
  <put request in b>
  up(c1);
  <some other code; may abort>
  down(c2); // wait for answer
  <get answer from b; may abort>
}
\end{BVerbatim}
\end{minipage}\hfill%
\begin{minipage}[t]{.45\textwidth}
\begin{BVerbatim}[baseline=t]
// Party2 (Worker)
atomically {
  down(c1); // wait for data
  <get request from b>
  <compute answer; may abort>
  <put answer in b>
  up(c2); 
}
\end{BVerbatim}
\end{minipage}
\\[1ex]
Unfortunately, this solution does not work: any admissible execution requires an interleaved scheduling between the two transactions, thus violating isolation; hence, the transactions deadlock as none of them can progress.  It is important to notice that this deadlock arises because interaction occurs between threads of \emph{different} transactions; 
in fact, the solution above is perfectly fine for threads outside transactions or within the same transaction.

To overcome this limitation, in this paper we propose a programming model for \emph{safe, data-driven} interactions between memory transactions.  The key observation is that \emph{atomicity} and \emph{isolation} are two disjoint computational aspects:
\begin{itemize}[noitemsep]
	\item an \emph{atomic non-isolated} block is executed ``all-or-nothing'', but its execution can overlap others' and \emph{uncontrolled} access to shared data is allowed;
	\item a \emph{non-atomic isolated} block is executed ``as it were the only one'' (i.e., in mutual exclusion with others), but no rollback on errors is provided.
\end{itemize} 
Thus, a ``normal'' block of code is neither atomic nor isolated; a mutex block (like Java \emph{synchronized} methods) is isolated but not atomic; and a usual TM transaction is a block which is both atomic and isolated.  Our claim is that \emph{atomic non-isolated blocks can be fruitfully used for implementing safe composable interacting memory transactions}---henceforth called \emph{open transactions}.

In this model, a transaction is composed by several threads, called \emph{participants}, which can cooperate on shared data.  A transaction commits when all its participants commit, and aborts if any thread aborts.  Threads participating to different transactions can access to shared data, but when this happens the transactions are \emph{transparently merged} into a single one.  For instance, the two transactions of the synchronization example above would automatically merge becoming the same transaction, so that the two threads can synchronize and proceed.  Thus, this model relaxes the isolation requirement still guaranteeing atomicity and consistency; moreover, it allows for \emph{loosely-coupled} interactions since transaction merging is driven only by run-time accesses to shared data, without any explicit coordination among the participants beforehand.

In summary, the contributions of this paper are the following:
\begin{itemize}[noitemsep]
	\item We present \emph{Open Transactional Memory}, a transactional memory model where multi-threaded transactions can interact by non-isolated access to shared data. Consistency and atomicity are ensured by transparently \emph{merging} transactions at runtime.
	
	\item We describe this model in the context of Concurrent Haskell (Section \ref{sec:cot}).   Namely, we define two monads \type{OTM} and \type{ITM}, representing the computational aspects of atomic \emph{multi-threaded open} (i.e., non-isolated) transactions and atomic \emph{single-threaded isolated} transactions, respectively.  Using the construct \type{atomic}, programs in the \type{OTM} monad are executed ``all-or-nothing'' but without isolation; hence these transactions can merge at runtime. When needed, blocks inside transactions can be executed in isolation by using the construct \textcode{isolated}.
	Both OTM and ITM transactions are \emph{composable}, 
	and we exploit Haskell type system to forbid irreversible effects inside these two monads.\footnote{In fact, OTM model can be implemented in any programming language, provided we have some means, either static or dynamic, to forbid irreversible effects inside transactions.}
	
	\item We provide a formal operational semantics of our system (Section \ref{sec:semantics}).
	 This semantics defines clearly the behaviour also in less intuitive situations, and serves as a reference for implementations.
	Using this semantics we prove that OTM satisfies the \emph{opacity} correctness criterion for transactions  \cite{gk:ppopp08}.
\end{itemize}
Some concluding remarks and directions for future work are in Section~\ref{sec:conclusions}.

\section{Concurrency in Haskell}\label{sec:background}
Haskell was born as pure lazy functional language;
side effects are handled by means of monads
\cite{pw:popl1993}.
For instance, I/O actions have type \type{IO\;a} and can be combined 
together by the monadic bind combinator \textcode{>>=}.
Therefore, the function
\textcode{putChar :: Char -> IO ()} takes a character
and delivers an I/O action that, when performed (even multiple times),
prints the given character.
Besides external inputs/outputs, values of \type{IO}
include operations with side effects on mutable (typed) cells.
A cell holding values of type \type{a}
has type \type{IORef\;a} and may be dealt with only via the following operations:
\begin{Verbatim}[tabsize=3, xleftmargin=2ex, gobble=1]
	newIORef   :: a -> IO (IORef a)
	readIORef  :: IORef a -> IO a
	writeIORef :: IORef a -> a -> IO ()
\end{Verbatim}

Concurrent Haskell \cite{pgf:popl1996}
adds support to threads which independently
perform a given I/O action as explained by
the type of the thread creation function:
\begin{Verbatim}[tabsize=3, xleftmargin=2ex, gobble=1]
	forkIO :: IO () -> IO ThreadId
\end{Verbatim}
The main mechanism for safe thread communication and synchronisation
are \emph{MVars}. A value of type \type{MVar\;a} is mutable location
(as for \type{IORef\;a}) that is either empty or full with a value of 
type \type{a}. There are two fundamental primitives to interact
with MVars:
\begin{Verbatim}[tabsize=3, xleftmargin=2ex, gobble=1]
	takeMVar :: Mvar a -> IO a
	putMvar  :: Mvar a -> a -> IO ()
\end{Verbatim}
The first empties a full location and blocks otherwise
whereas the second fills an empty location and blocks otherwise.
Therefore, MVars can be seen as one-place channels and
the particular case of \type{MVar\;()} corresponds to binary semaphores.

We refer the reader to \cite{jones:2010awkward-squad} for an 
introduction to concurrency, I/O, exceptions, and
cross language interfacing (the ``awkward squad'' of pure, lazy, functional programming).

STM Haskell \cite{hmpm:ppopp2005} builds on Concurrent Haskell
adding \emph{transactional actions} and a transactional memory for safe 
thread communication, called \emph{transactional variables} or \emph{TVars} for short.

Transactional actions have type \type{STM\;a}
and are concatenated using \type{STM} monadic ``bind'' combinator, akin I/O actions.
A transactional action remains tentative during its execution
and (its effect) is exposed to the rest of the system by
\begin{Verbatim}[tabsize=3, xleftmargin=2ex, gobble=1]
	atomically :: STM a -> IO a
\end{Verbatim}
which takes an STM action and delivers an I/O action that,
when performed, runs the transaction guaranteeing
atomicity and isolation with respect to the rest of the
system.

Transactional variables have type \type{TVar\;a} where
\type{a} is the type of the value held and, like IOrefs,
are manipulated via the interface:
\begin{Verbatim}[tabsize=3, xleftmargin=2ex, gobble=1]
	newTVar   :: a -> STM (TVar a)
	readTVar  :: TVar a -> STM a
	writeTVar :: TVar a -> a -> STM ()
\end{Verbatim}
For instance, the following code uses monadic bind
to combine a read and write operation on a transactional
variable and define a ``transactional update'':
\begin{Verbatim}[tabsize=3, xleftmargin=2ex, gobble=1]
	modifyTVar :: TVar a -> (a -> a) -> STM ()
	modifyTVar var f = do
		x <- readTVar var
		writeOTVar var (f x) 
\end{Verbatim}
Then, \textcode{atomically\;(modifyTVar\;x\;f)} delivers 
an I/O action that applies \textcode{f} to 
the value held by \textcode{x} and updates \textcode{x}
accordingly---the two steps being executed as a single atomic isolated operation.

The primitives recalled so far cover memory interaction,
but STM allows also for \emph{composable blocking}. 
In STM Haskell, blocking translates in 
``this thread has been scheduled too early, i.e., the right conditions are not fulfilled (yet)''. The programmer can tell the scheduler about this fact by means of 
the primitive:
\begin{Verbatim}[tabsize=3, xleftmargin=2ex, gobble=1]
	retry :: STM a
\end{Verbatim}
The semantics of \textcode{retry} is to abort the transaction 
and re-run it after at least one of the transactional variables
it has read from has been updated---there is no point in
blindly restarting a transaction. 

Finally, transactions can be composed as alternatives by means of
\begin{Verbatim}[tabsize=3, xleftmargin=2ex, gobble=1]
	orElse :: STM a -> STM a -> STM a
\end{Verbatim}
which evaluates its first argument, and if this results is a \textcode{retry} the second argument is evaluated discarding any effect of the first.

\section{Composable open transactions}\label{sec:cot}

In this section we present the key ideas of the paper by gradually introducing the primitives  from the \libOTM library, summarised in Figure~\ref{fig:base-interface}.

Although the OTM model can be implemented in any language,
we consider Haskell because
its expressive type system offers a perfect environment
for studying the ideas of transactional memory.  
In \cite{hmpm:ppopp2005} this has been used to 
single out computations which can be executed in transactions, 
i.e.~terms which can perform memory effects, from those which
can perform irreversible input/output effects.  In this paper
we refine further this approach by using the type system to
separate \emph{isolated} transactions from those which can
interact, and hence merged.

\begin{figure}[t]
	\centering
	\begin{BVerbatim}[tabsize=3, gobble=2]
		data ITM a
		data OTM a
		-- henceforth, t is a placeholder for ITM or OTM --
		
		-- Sequencing, do notation ------------------------
		(>>=)  :: t a -> (a -> t b) -> t b
		return :: a -> t a
		
		-- Running isolated and atomic computations -------
		atomic   :: OTM a -> IO a
		isolated :: ITM a -> OTM a
		retry    :: ITM a
		orElse   :: ITM a -> ITM a -> ITM a
		
		-- Exceptions -------------------------------------
		throw :: Exception e => e -> t a
		catch :: Exception e => t a -> (e -> t a) -> t a
		
		-- Threading --------------------------------------
		fork :: OTM () -> OTM ThreadId
		
		-- Transactional memory ---------------------------
		data OTVar a
		newOTVar     :: a -> ITM (OTVar a)
		readOTVar    :: OTVar a -> ITM a
		writeOTVar   :: OTVar a -> a -> ITM ()
	\end{BVerbatim}
	\caption{The base interface of \libOTM.}
	\label{fig:base-interface}
\end{figure}

The key point is to separate isolation from atomicity. 
In fact, isolation is a computational aspect which can be 
\emph{added} to atomic transactions. From this perspective,
we distinguish between isolated atomic actions
and (non isolated) atomic actions. The former are values
of type \type{ITM\;a} and the latter of \type{OTM\;a}. 
Each type of actions can be sequentially composed 
(by the corresponding monadic binders) 
preserving atomicity and, for the former, isolation.

The function \textcode{isolated} takes an isolated atomic action and
delivers an atomic action whose effects are guaranteed to be executed
in isolation with respect to other actions. 
Then, \textcode{atomic} takes an atomic action and delivers an I/O action
that when performed runs a transaction whose effects are kept tentative 
until it commits. 
Tentative effects are shared among all non-isolated transactions.
Therefore, any value of type \type{STM\;a} can be seen as a value of \type{ITM\;a} for the I/O they deliver is the same:
\begin{Verbatim}[tabsize=3, xleftmargin=2ex, gobble=1]
	atomically  = atomic . isolated
\end{Verbatim}

\paragraph{Isolation}
\libOTM supports composable blocking via the primitive \textcode{retry}, 
under \libSTM slogan ``a thread that has to be blocked because it has 
been scheduled too soon''. As for \libSTM, retrying a transactional action 
actually corresponds to block the threads on some condition.
Note that 
\textcode{retry\,::\,OTM\;a} is not a primitive since it can be
defined from that of \textcode{ITM} as \textcode{isolated\;retry}.

Checks may be declared as follows:
\begin{Verbatim}[tabsize=3, xleftmargin=2ex, gobble=1]
	check :: Bool -> ITM ()
	check b = if b then return () else retry
\end{Verbatim}
although similar primitives may be implemented at the runtime
level in order to use this information in thread scheduling. 

 \libOTM provides a mechanism for safe thread
communication by means of transactional variables called \emph{OTVars}, similar to \libSTM's TVars but supporting \emph{open} transactions.
These variables are values of type \type{OTVar\;a} where \type{a} is
the type of value held.
Creating, reading and writing OTVars is done via
the interface shown in Figure~\ref{fig:base-interface}. 
All these actions are both atomic and isolated as ensured by their type.
Therefore, when it comes to actions of type \type{ITM\;a},
OTVars are basically TVars;
e.g.~\textcode{modifyTVar} from \libSTM corresponds to:
\begin{Verbatim}[tabsize=3, xleftmargin=2ex, gobble=1]
	modifyOTVar :: OTVar a -> (a -> a) -> ITM ()
	modifyOTVar var f = do
		x <- readOTVar var
		writeOTVar var (f x) 
\end{Verbatim}
From its type it is immediate to see that the update is both atomic and 
isolated. In fact, read and write operations are glued together by the 
\textcode{>>=} combinator, preserving both properties.

Likewise, invariants on transactional variables can be easily
checked by composing reads and checks as follows:
\begin{Verbatim}[tabsize=3, xleftmargin=2ex, gobble=1]
	assertOTVar :: OTVar a -> (a -> Bool) -> ITM ()
	assertOTVar var p = do
		x <- readOTVar var
		check (p x)
\end{Verbatim}

\paragraph{Blocking}
A semaphore is a counter with two fundamental operation:
\textcode{up} which increments the counter and
\textcode{down} which decrements the counter if
it is not zero and blocks otherwise.
Semaphores are implemented using \libOTM as OTVars
holding a counter:
\begin{Verbatim}[tabsize=3, xleftmargin=2ex, gobble=1]
	type Semaphore = OTVar Int
\end{Verbatim}
Then, \textcode{up} and \textcode{down} are
two trivial atomic and isolated updates,
with the latter being guarded by a pre-condition:
\begin{Verbatim}[tabsize=3, xleftmargin=2ex, gobble=1]
	up :: Semaphore -> ITM ()
	up s = modifyOTvar s (1+)
	
	down :: Semaphore -> ITM ()
	down s = do
		assertOTVar s (> 0)
		modifyOTVar s (-1+)
\end{Verbatim}

Actions can also be composed as alternatives
by means of the primitive \textcode{orElse}.
For instance, the following takes a family of semaphores
and delivers an action that decrements one of them, blocking
only if none can be decremented:
\begin{Verbatim}[tabsize=3, xleftmargin=2ex, gobble=1]
	downAny :: [Sempahore] -> ITM ()
	downAny (x:xs) = down x `orElse` downAny xs
	downAny [] = retry
\end{Verbatim}

\paragraph{Interaction}
The interchangeability of \libOTM and \libSTM ends
when isolation is dropped. In fact, \libOTM offers shared OTVars as a 
mechanism for safe \emph{transaction interaction}.
This means that non-isolated transactional actions see
the effects on shared variables of any other non-isolated
transactional action, as they are performed concurrently
on the same object. This flow of information introduces
dependencies between concurrent tentative actions tying
together their fate: an action cannot make its effects permanent, if it 
depends on informations produced by another action which fails to complete. 
\libOTM guarantees coherence of transactional actions 
in presence of interaction through shared transactional
variables.  Thus, OTVars enables 
loosely-coupled interaction right inside atomic actions 
taking the programming style of \libSTM a step further. 
For instance, communication, rendezvous, brokering,
and in general, multi-party interactions can all be atomic (non-isolated)
actions.

In order to substantiate these claims, let us see open transactions in action by implementing a synchronisation scenario as described in Section~\ref{sec:introduction}.
In this example a master process outsources part of an atomic computation to some thread chosen from a worker pool; data is exchanged via some shared variable, whose access is coordinated by a pair of semaphores. Notably, both the master and the worker can abort the computation at any time, leading the other party to abort as well. 
This can be achieved straightforwardly using \libOTM:
\\[1ex]
\hspace*{1.5ex}
\begin{BVerbatim}[tabsize=3, xleftmargin=2ex, gobble=1]
	master c1 c2 = do
		-- put request
		isolated (up c1)
		-- do something else
		isolated (down c2)
		-- get answer
\end{BVerbatim}
\hfill
\begin{BVerbatim}[tabsize=3, xleftmargin=2ex, gobble=1]
	worker c1 c2 = do
		-- do something
		isolated (down c1)
		-- get request
		-- put answer
		isolated (up c2)
\end{BVerbatim}
\\[1ex]
Both functions deliver atomic actions in
\textcode{OTM}, and hence are not isolated. We used semaphores 
for the sake of exposition but we could synchronize by means of more abstract
mechanisms, like barriers, channels or futures, 
which can be implemented using 
\libOTM. 

\paragraph{Concurrency}
Differently from \libSTM, \libOTM supports parallelism inside non-isolated transactions. We can easily fork new threads without leaving \type{OTM} but, like any effect of a transactional action, thread creation and execution remain tentative until the whole transaction commits.  Forked threads participate to their transaction and impact its life-cycle (e.g.~issuing aborts) as any other participant.
This means that before committing, all forked threads have to complete their transactional action, i.e.~terminate.
Therefore, although the whole effect delivered by the transaction has happened concurrently, forked threads  never leave a transaction alive.

Because of their transactional nature, threads forked inside
a transaction do not have compensations nor continuations 
(i.e.~I/O actions to be executed after an abort or after a commit).
Compensations are pointless since aborts revert all effects including 
thread creation. It is indeed possible to replace the primitive 
\textcode{fork} with one supporting I/O actions as continuations like
\begin{Verbatim}[tabsize=3, xleftmargin=2ex, gobble=1]
	forkCont :: OTM a -> (a -> IO ()) -> OTM ThreadID
\end{Verbatim}
In fact, this mechanism can be implemented by means of the primitives
already offered \libOTM:  since commits are synchronisation points,
the above corresponds to the parent thread
forking a thread for each continuation, after the atomic action is
successfully completed.

On the other hand, by definition isolated atomic actions have to appear as being executed in a single-threaded setting; hence  \type{ITM}, like \type{STM}, does not support thread creation. 

\section{Formal specification of \libOTM}
\label{sec:semantics}

\begin{figure}[t]
\centering
	\begin{tabular}{lrl}
		Value & $V\Coloneqq$ & 
		$r \mid \textcode{\textbackslash $x$\;->\;$M$}
		\mid \textcode{return\;$M$}
		\mid \textcode{$M$\;>>=\;$N$}
		\mid \textcode{throw\;$M$}
		\mid \textcode{catch\;$M$\;$N$}
		\mid \textcode{putChar\;$c$}
		\mid $\\&&$
			 \textcode{getChar}
		\mid \textcode{fork\;$M$}
		\mid \textcode{atomic\;$M$\;$N$}
		\mid \textcode{isolated\;$M$}
		\mid \textcode{retry}
		\mid \textcode{$M$\;`orElse`\;$N$} 
		\mid $\\&&$
			 \textcode{newOTVar\;$M$} 
		\mid \textcode{readOTVar\;$r$}
		\mid \textcode{writeOTVar\;$r$\;$M$}
		$\\
		Term & $M, N\Coloneqq$ & $x \mid V \mid M\,N \mid \dots$
	\end{tabular}
	\caption{The syntax of values and terms.}
	\label{fig:syntax}
\end{figure}

\begin{figure}[t]
	\centering
	\begin{tabular}{lrl}
		Thread & $T_t \Coloneqq$&$\ctmthread{M} \mid \ctmthreadtr{M}{N}{t}{k}$
		\\
		Thread family & $P \Coloneqq$ & $T_{t_1} \parallel \dots \parallel T_{t_n}\qquad \forall i,j\ t_i \neq t_j  $
		\\
		Expression &$\mathbb{E}\Coloneqq$&
		$\ctxhole \mid \textcode{$\mathbb{E}$ >>= $M$}$
		\\
		Plain process &$\mathbb P_t \Coloneqq$&
		$\ctmthread{\mathbb E}  \parallel P \hfill t \notin P$
		\\
		Transaction &$\mathbb T_{t,k} \Coloneqq$&
		$\ctmthreadtr{\mathbb E}{M}{t}{k}  \parallel P \hfill t \notin P$
		\\
		Any process &$\mathbb A_t \Coloneqq$&
		$\mathbb{P}_t \mid \mathbb{T}_{t,k}$	
	\end{tabular}
	\caption{Threads and evaluation contexts.}
	\label{fig:ctm-contextes}
\end{figure} 

\subsection{Syntax and abstract machine states}
We fix an Haskell-like language extended with the \libOTM primitives
of Figure~\ref{fig:base-interface}. The syntax is summarised in
Figure~\ref{fig:syntax} where the meta-variables $x$ and $r$
range over a given countable set of variables \Var\ and
of location names \Loc, respectively. We assume Haskell typing
conventions and denote the set of all well-typed terms by $\Term$.

Terms of this language are evaluated by an abstract state 
machine whose states are pairs $\ctmSt{\Sigma}{P}$ formed by:
\begin{itemize}[noitemsep]
	\item
	a \emph{thread family} (process) $P = T_{t_1} \parallel \dots \parallel T_{t_n}$,
	\item a \emph{memory} $\Sigma = \langle \Theta, \Delta ,\Psi \rangle$, where $\Theta : \Loc \rightharpoonup \Term$ is the \emph{heap} and $\Delta : \Loc \rightharpoonup \Term \times \TrName$ is the \emph{working memory};
	$\TrName$ is a set of names used to identify active transactions;
	$\Psi$ is a forest of threads identifiers.
\end{itemize}

\paragraph{Threads}
Threads are the smaller unit of execution the machine scheduler operates on; they execute \libOTM terms and do not have any private transactional memory.
A thread outside transactions is represented by $\ctmthread{M}$ where $M$ is the term being evaluated and $t$ is a unique \emph{thread identifier} (Figure~\ref{fig:ctm-contextes}). A thread inside a transaction $k$ is represented by $\ctmthreadtr{M}{N}{t}{k}$ where	$M$	is the term being evaluated inside the transaction $k$ and 
$N$ is the term being evaluated as \emph{continuation} after $k$ commits or aborts.

At any time, all thread identifiers are stored in the auxiliary structure $\Psi$, which is a forest reflecting how threads are forked: if $t'$ has been forked by $t$ while inside $k$ then $t'$ belongs to $k$ too and occurs in $\Psi$ as a child of $t$.

We shall present thread families borrowing the parallel operator $\parallel$ from process algebra (Figure~\ref{fig:ctm-contextes}). The operator is associative, commutative and defined only on threads whose thread identifiers are distinct. The notation is extended to thread families (i.e.~processes) with $\mathbf{0}$ denoting the empty family.

\paragraph{Memory}
The memory $\Sigma$ is divided in the heap $\Theta$ and in the distributed
working memory $\Delta$ (plus the auxiliary structure $\Psi$ recording thread fork hierarchy). 
As for traditional closed (ACID) transactions
(e.g.~\cite{hmpm:ppopp2005}), operations inside a transaction
are evaluated against $\Delta$ and effects are propagated to $\Theta$
only on commits.  When a thread inside a transaction $k$ accesses a
location outside $\Delta$ the location is \emph{claimed by transaction $k$} and
remains claimed until $k$ commits, aborts or restarts. Threads in $k$
can interact only with locations claimed by $k$, but active transactions can be merged to share their claimed locations.

We shall denote the set of all possible states as $\State$, and reference to each projected component of $\Sigma$ by a subscript, i.e. $\Sigma_\Theta$ for the heap and $\Sigma_\Delta$ for the working memory.
When describing  updates to the memory $\Sigma$, we adopt the convention that $\Sigma'$ has to be intended equals to $\Sigma$ except if stated otherwise, i.e. by statements like $\Sigma'_\Theta = \Sigma_\Theta[r \mapsto M]$. Finally, the completely undefined partial function $\varnothing$ denotes the empty heap and working memory.

\subsection{Operational semantics}

\begin{figure}[t]
	\begin{gather*}
		\rulefrac[{\rulelabel[Eval]{rule:admn-eval}}]{
			M \not\equiv V \quad \mathcal{V}[M] = V
		}{
			M \to V
		}
\\
		\rulefrac[{\rulelabel[BindVal]{rule:admn-bindv}}]{
		}{
			\textcode{return\;$M$\;>>=\;$N$} \to \textcode{$N$\,$M$}
		}
		\qquad
		\rulefrac[{\rulelabel[BindEx]{rule:admn-binde}}]{
			\textcode{e} \in \{\textcode{retry},\textcode{throw\;$N$}\}
		}{
			\textcode{e\;>>=\;$M$} \to \textcode{e}
		}
\\
		\rulefrac[{\rulelabel[CatchVal]{rule:admn-catchv}}]{
			\textcode{r} \in \{\textcode{retry},\textcode{return\;$N$}\}
		}{
			\textcode{r\;`catch`\;$M$} \to \textcode{r}
		}
		\quad
		\rulefrac[{\rulelabel[CatchEx]{rule:admn-catche}}]{
		}{
			\textcode{throw\;$M$\;`catch`\;$N$} \to \textcode{$N$\,$M$}
		}
	\end{gather*}
	\caption{Term reductions: $\scriptsize M \to N$.}
	\label{fig:semantics-term}
\end{figure}	

\begin{figure}[t]
	\begin{gather*}
		\rulefrac[{\rulelabel[InChar]{rule:input-char}}]{
		}{
			\ctmSt{\Sigma}{\ctxP{\textcode{getChar}}} 
			\xrightarrow{?c}
			\ctmSt{\Sigma}{\ctxP{\textcode{return\;$c$}}}
		}		
\\
		\rulefrac[{\rulelabel[OutChar]{rule:output-char}}]{
		}{
			\ctmSt{\Sigma}{\ctxP{\textcode{putChar\;$c$}}} 
			\xrightarrow{!c}
			\ctmSt{\Sigma}{\ctxP{\textcode{return\;()}}}
		}
\\
		\rulefrac[{\rulelabel[TermIO]{rule:termio}}]{
			M \to N
		}{
			\ctmSt{\Sigma}{\ctxP{M}} 
			\xrightarrow{}
			\ctmSt{\Sigma}{\ctxP{N}}
		}
\\
		\rulefrac[{\rulelabel[ForkIO]{rule:forkio}}]{
			t'\notin \mathsf{threads}{\ctxP{\textcode{fork\;$M$}}}
		}{
			\ctmSt{\Sigma}{\ctxP{\textcode{fork\;$M$}}}
			\xrightarrow{}
			\ctmSt{\Sigma}{\ctxP{\textcode{return\;$t'$}} \parallel 
				\ctmthreadtr{M}{\textcode{return}}{t'}{k}}
		}
	\end{gather*}
	\caption{\type{IO} state transitions.}
	\label{fig:semantics-io}
\end{figure}	

\begin{figure}[!ht]
	\begin{gather*}
		\rulefrac[{\rulelabel[TermT]{rule:term}}]{
			M \to N
		}{
			\ctmSt{\Sigma}{\ctxT{M}} 
			\xrightarrow{\tau}
			\ctmSt{\Sigma}{\ctxT{N}}
		}
	\\
		\rulefrac[{\rulelabel[ForkT]{rule:fork}}]{
			t'\notin \mathsf{threads}(\ctxT{\textcode{fork\;$M$}})
			\qquad
			\Sigma'_\Psi=
			\mathsf{add\_child}(t,t',\Sigma_\Psi)
		}{
			\ctmSt{\Sigma}{\ctxT{\textcode{fork\;$M$}}}
			\xrightarrow{\tau}
			\ctmSt{\Sigma'}{\ctxT{\textcode{return\;$t'$}} \parallel 
				\ctmthreadtr{M}{\textcode{return}}{t'}{k}}
		}
	\\
		\rulefrac[{\rulelabel[NewVar]{rule:newvar}}]{
			r \notin \dom(\Sigma_\Theta)\cup\dom(\Sigma_\Delta)\qquad
			\Sigma'_\Delta = \Sigma_\Delta[r\mapsto (M,k)]
		}{
			\ctmSt{\Sigma}{\ctxT{\textcode{newOTVar\;$M$}}}
			\xrightarrow{\tau}
			\ctmSt{\Sigma'}{\ctxT{\textcode{return\;$r$}}}
		}
	\\
		\rulefrac[{\rulelabel[Read1]{rule:read-miss}}]{
			r \notin \dom(\Sigma_\Delta) \quad 
			\Sigma_\Theta(r) = M \quad 
			\Sigma'_\Delta = \Sigma_\Delta[r \mapsto (M,k)]
		}{
			\ctmSt{\Sigma}{\ctxT{\textcode{readOTVar\;$r$}}}
			\xrightarrow{\tau}
			\ctmSt{\Sigma'}{\ctxT{\textcode{return\;$M$}}}
		}
	\\
		\rulefrac[{\rulelabel[Read2]{rule:read-hit}}]{
			\Sigma_\Delta(r) = (M,j) \qquad
			\Sigma'_\Delta = \Sigma_\Delta[k \mapsto j]
		}{
			\ctmSt{\Sigma}{\ctxT{\textcode{readOTVar\;$r$}}}
			\xrightarrow{\tau}
			\ctmSt{\Sigma'}{\ctxT[t,j]{\textcode{return\;$M$}}}
		}
	\\
		\rulefrac[{\rulelabel[Write1]{rule:write-miss}}]{
			r \notin \dom(\Sigma_\Delta) \qquad
			\Sigma'_\Delta = \Sigma_\Delta[r \mapsto (M,k)]
		}{
			\ctmSt{\Sigma}{\ctxT{\textcode{writeOTVar\;$r$\;$M$}}}
			\xrightarrow{\tau}
			\ctmSt{\Sigma'}{\ctxT{\textcode{return\;()}}}
		}
	\\
		\rulefrac[{\rulelabel[Write2]{rule:write-hit}}]{
			\Sigma_\Delta(r) = (N,j) \qquad
			\Sigma'_\Delta = \Sigma_\Delta[k \mapsto j][r \mapsto (M,j)]
		}{
			\ctmSt{\Sigma}{\ctxT{\textcode{writeOTVar\;$r$\;$M$}}}
			\xrightarrow{\tau}
			\ctmSt{\Sigma'}{\ctxT[t,k]{\textcode{return\;()}}[k \mapsto j]}
		}
	\\
		\rulefrac[{\rulelabel[Or1]{rule:orfirst}}]{
			\textcode{op} \in \{\textcode{throw}, \textcode{return}\}
			\qquad
			\ctmSt{\Sigma}{\ctmthreadtr{M}{\textcode{return}}{t}{k}} 
			\xrightarrow{\tau}^*
			\ctmSt{\Sigma'}{\ctmthreadtr{\textcode{op\;$N$}}{\textcode{return}}{t}{j}}
		}{
			\ctmSt{\Sigma}{\ctxT{\textcode{$M$\;`orElse`\;$M'$}}}
			\xrightarrow{\tau}
			\ctmSt{\Sigma'}{\mathbb{T}'_{t,j}[\textcode{op\;$N$}]}
		}
	\\
		\rulefrac[{\rulelabel[Or2]{rule:orsecond}}]{
			\ctmSt{\Sigma}{\ctmthreadtr{M}{\textcode{return}}{t}{k}} 
			\xrightarrow{\tau}^*
			\ctmSt{\Sigma'}{\ctmthreadtr{\textcode{retry}}{\textcode{return}}{t}{j}}
		}{
			\ctmSt{\Sigma}{\ctxT{\textcode{$M$\;`orElse`\;$M'$}}}
			\xrightarrow{\tau}
			\ctmSt{\Sigma}{\ctxT{M'}}
		}	
	\\
		\rulefrac[{\rulelabel[Isolated]{rule:isolated}}]{
			\textcode{op} \in \{\textcode{throw}, \textcode{return}\}
			\qquad 
			\ctmSt{\Sigma}{\ctmthreadtr{M}{\textcode{return}}{t}{k}} 
			\xrightarrow{\tau}^*
			\ctmSt{\Sigma'}{\ctmthreadtr{\textcode{op\;$N$}}{\textcode{return}}{t}{j}}
		}{
			\ctmSt{\Sigma}{\ctxT{\textcode{isolated\;$M$}}} 
			\xrightarrow{\tau}
			\ctmSt{\Sigma'}{\ctxT[t,j]{\textcode{op\;$N$}}}
		}
	\end{gather*}
	\caption{Transactional state transitions: $\scriptsize\ctmSt{\Sigma}{P} \xrightarrow{\tau}\ctmSt{\Sigma'}{P'}$.}
	\label{fig:semantics-tau}
\end{figure}	

\begin{figure}[!ht]
	\begin{gather*}
		\rulefrac[{\rulelabel[New]{rule:tr-new}}]{
		}{
			\ctmSt{\Sigma}{\ctmthread{\textcode{atomic\;$M$\;>>=\;$N$}}}
			\xrightarrow{new\langle k\rangle}
			\ctmSt{\Sigma}{\ctmthreadtr{M}{N}{t}{k}}
		}
	\\
		\rulefrac[{\rulelabel[Commit]{rule:tr-commit}}]{
			\Sigma'_\Theta = \mathsf{commit}(k,\Sigma) \qquad
			\Sigma'_\Delta = \mathsf{cleanup}(k,\Sigma)
		}{
			\ctmSt{\Sigma}{\ctmthreadtr{\textcode{return\;$M$}}{N}{t}{k}}
			\xrightarrow{co\langle k\rangle}
			\ctmSt{\Sigma'}{\ctmthread{\textcode{return\;$M$\;>>=\;$N$}}}
		}
	\\
		\rulefrac[{\rulelabel[Abort1]{rule:tr-abort-1}}]{
			\Sigma'_\Theta = \mathsf{leak}(k,\Sigma) \qquad
			\Sigma'_\Delta = \mathsf{cleanup}(k,\Sigma) \\
			\Sigma'_\Psi = \mathsf{remove}(r,\Sigma_\Psi) \qquad
			r = \mathsf{root}(t,\Sigma_\Psi)
		}{
			\ctmSt{\Sigma}{\ctmthreadtr{\textcode{throw\;$M$}}{N}{t}{k}}
			\xrightarrow{ab\langle k, t, M\rangle}
			\ctmSt{\Sigma'}{\ctmthread{\textcode{throw\;$M$\;>>=\;$N$}}}
		}
	\\
		\rulefrac[{\rulelabel[Abort2]{rule:tr-abort-2}}]{
			\Sigma'_\Theta = \mathsf{leak}(k,\Sigma) \qquad
			\Sigma'_\Delta = \mathsf{cleanup}(k,\Sigma) \\
			\Sigma'_\Psi = \mathsf{remove}(r,\Sigma_\Psi) \qquad
			r = \mathsf{root}(t,\Sigma_\Psi) \quad
			r = \mathsf{root}(t',\Sigma_\Psi)
		}{
			\ctmSt{\Sigma}{\ctmthreadtr{M'}{N}{t'}{k}}
			\xrightarrow{\overline{ab}\langle k, t, M\rangle}
			\ctmSt{\Sigma'}{\ctmthread[t']{\textcode{throw\;$M$\;>>=\;$N$}}}
		}
	\\
		\rulefrac[{\rulelabel[Abort3]{rule:tr-abort-3}}]{
			\Sigma'_\Theta = \mathsf{leak}(k,\Sigma) \qquad
			\Sigma'_\Delta = \mathsf{cleanup}(k,\Sigma) 
			\\
			\Sigma'_\Psi = \mathsf{remove}(r,\Sigma_\Psi) \qquad
			r = \mathsf{root}(t,\Sigma_\Psi) \qquad
			r \neq \mathsf{root}(t',\Sigma_\Psi)
		}{
			\ctmSt{\Sigma}{\ctmthreadtr{M'}{N}{t'}{k}}
			\xrightarrow{\overline{ab}\langle k, t, M\rangle}
			\ctmSt{\Sigma'}{\ctmthread[t']{\textcode{retry}}}
		}
	\\
		\rulefrac[{\rulelabel[MCastAb]{rule:tr-multicast-abort}}]{
			\ctmSt{\Sigma}{P} \xrightarrow{ab\langle k, t, M\rangle} \ctmSt{\Sigma'}{P'}
			\quad
			\ctmSt{\Sigma}{Q} \xrightarrow{\overline{ab}\langle k, t, M\rangle} \ctmSt{\Sigma'}{Q'}
			\hfill
		}{
			\ctmSt{\Sigma}{P\parallel Q} \xrightarrow{ab\langle k, t, M\rangle} \ctmSt{\Sigma'}{P'\parallel Q'}
		}
	\\
		\rulefrac[{\rulelabel[MCastCo]{rule:tr-multicast-commit}}]{
			\ctmSt{\Sigma}{P} \xrightarrow{co\langle k\rangle} \ctmSt{\Sigma'}{P'}
			\quad
			\ctmSt{\Sigma}{Q} \xrightarrow{co\langle k\rangle} \ctmSt{\Sigma'}{Q'}
			\hfill
		}{
			\ctmSt{\Sigma}{P \parallel Q}  \xrightarrow{co\langle k\rangle} \ctmSt{\Sigma'}{P' \parallel Q'}	
		}
	\\
		\rulefrac[{\rulelabel[MCastGroup]{rule:tr-multicast-context}}]{
			\ctmSt{\Sigma}{P} \xrightarrow{\beta} \ctmSt{\Sigma'}{P'}
			\quad
			\beta \neq \tau 
			\quad 
			\mathsf{transaction}(\beta) \notin  \mathsf{transactions}(Q)
			\hfill
		}{
			\ctmSt{\Sigma}{P \parallel Q} \xrightarrow{\beta} \ctmSt{\Sigma'}{P' \parallel Q}	
		}
	\end{gather*}
	\caption{Transaction management transitions: $\scriptsize\ctmSt{\Sigma}{P} \xrightarrow{\beta}\ctmSt{\Sigma'}{P'}$.}
	\label{fig:semantics-trs-mgr}
\end{figure}	

\begin{figure}[!ht]
	\begin{align*}
		\mathsf{threads}(T_{t_1} \parallel \dots \parallel T_{t_n}) &\defeq \{t_1, \dots t_n\}
\\
		\mathsf{transaction}(\beta) &\defeq k \text{ for } 
		\beta\in\{new\langle k\rangle, co\langle k\rangle, ab\langle k, t, M\rangle, \overline{ab}\langle k, t, M\rangle\}
\\
		(\Delta[k \mapsto j])(r) &\defeq
		\begin{cases}
			\Delta(r) &\!\text{if}\ \Delta(r)= (M,l),l\neq k\\
			(M,j) &\!\text{if}\ \Delta(r) = (M, k)
		\end{cases}
\\
		\mathsf{transactions}(P) &\defeq 
		\begin{cases}
			\mathsf{transactions}(P_1) \cup \mathsf{transactions}(P_2) & 
				\!\text{if } P = P_1 \parallel P_2\\
			\{k\}&\!\text{if } P = \ctmthreadtr{M}{N}{t}{k}\\
			\emptyset &\!\text{otherwise}
		\end{cases}
\\
		P[k \mapsto j] &\defeq 
		\begin{cases}
			P_1[k \mapsto j] \parallel P_2[k \mapsto j] &\!\text{if } P = P_1 \parallel P_2\\
			\ctmthreadtr{M}{N}{t}{j} &\!\text{if } P = \ctmthreadtr{M}{N}{t}{k}\\
			P &\!\text{otherwise}
		\end{cases}
\\
		\Theta[r \mapsto M](s) &\defeq
		\begin{cases}
			M &\!\text{if } r = s\\
			\Theta(s) &\!\text{otherwise}
		\end{cases}
\\
		\Delta[r \mapsto (M,k)](s) &\defeq
		\begin{cases}
			(M,k) &\!\text{if } r = s\\
			\Delta(s) &\!\text{otherwise}
		\end{cases}
\\
		\mathsf{cleanup}(k,\Sigma)(r) &\defeq 
		\begin{cases}
			\perp &\!\!\text{if } \Sigma_\Delta(r) = (M,k)\\
			\Sigma_\Delta(r) &\!\!\text{otherwise}
		\end{cases}
\\
		\mathsf{commit}(k,\Sigma)(r) &\defeq 
		\begin{cases}
			M &\!\text{if } \Sigma_\Delta(r) = (M,k)\\
			\Sigma_\Theta(r) & \!\text{otherwise}
		\end{cases}
\\
		\mathsf{leak}(k,\Sigma)(r) &\defeq 
			M \!\text{ if } \Sigma_\Theta(r) = M \text{ or } \Sigma_\Theta(r) = {\perp} \text{ and } \Sigma_\Delta(r) = (M,k)
	\end{align*}
	\caption{Auxiliary functions.}
	\label{fig:semantics-aux}
\end{figure}

The dynamics of the machine is defined by two transition relations presented in \Cref{fig:semantics-term,fig:semantics-io,fig:semantics-tau,fig:semantics-trs-mgr}. 
The first relation $M \to N$ is defined on terms only and models pure
computations (Figure~\ref{fig:semantics-term}). Rule \ruleref{rule:admn-eval} allows
a term $M$ that is not a value to be evaluated by means of an
auxiliary (partial) function $\mathcal{V}[M]$ yielding the value $V$; the other rules define the semantics of the monadic
\type{bind} and exception handling in a standard way.
We remark the symmetry between \type{bind} and \type{catch} and how \textcode{retry} is treated as an exception by \ruleref{rule:admn-binde} and as a result value by \ruleref{rule:admn-catchv}.

Relation $\to$ can be thought as accessory to the second relation $\ctmSt{\Sigma}{P} \xrightarrow{\beta} \ctmSt{\Sigma'}{P'}$, which describes state transitions. 
Since several rules can apply to a given state according to different evaluation contexts as per Figure~\ref{fig:ctm-contextes}, this relation is non-deterministic; this models the fact that the scheduler can choose which thread to execute next among various possibilities.
Labels $\beta$ describe the kind of transition, and are defined as follows:
\[
\beta ::= \tau
	\mid new\langle k\rangle 
	\mid co\langle k\rangle
	\mid ab\langle k, t, M\rangle
	\mid \overline{ab}\langle k, t, M\rangle
	\qquad
	\text{for  $k\in\TrName$, $M\in\Term$}
\]

Transitions labelled by $\tau$ represent \emph{internal} steps of transitions,
i.e., steps which do not need a coordination among transactions:
reduction of pure terms, thread creation and memory operations. 
These transitions are defined by the rules in Figure~\ref{fig:semantics-tau}.
Reading a location falls into two cases: rule \ruleref{rule:read-miss}
models the reading of an unclaimed location and its memory effect is
to record the claim in $\Delta$, while rule \ruleref{rule:read-hit}
models the reading of a claimed location and its effect is
to merge the transactions of the current thread with that claiming the
location. Writes behave similarly.
Rules \ruleref{rule:orfirst} and \ruleref{rule:orsecond}
describe the semantics of alternative sub-transactions: if the
first one \textcode{retry}-es the second is executed discarding any effect of the first.
Rule \ruleref{rule:fork} spawns a new thread for the current transaction; a term \type{fork M} can appear inside \textcode{atomic}, 
thus allowing multi-threaded open transactions, but its use inside \type{isolated}
is prevented by the type system and by the shape of \ruleref{rule:isolated} as well.

The remaining labels describe state transitions concerning the life-cycle of
transactions: creation, commit, abort, and restart (Figure~\ref{fig:semantics-trs-mgr}). These operations require a
coordination among threads; for instance, an abort from a thread has to be
propagated to every thread participating to the same
transaction.  This is captured in the semantics by labelling the transition
with the operation and the name of the transaction involved;
this information is used by the derivation rules to force synchronisation of
all participants of that transaction. 
To illustrate this mechanism, we describe the commit of a transaction $k$,
namely $\ctmSt{\Sigma}{P} \xrightarrow{co\langle k\rangle} \ctmSt{\Sigma'}{P'}$.
First, by means of \ruleref{rule:tr-multicast-context} we split $P$ into two 
subprocesses, one of which contains all threads participating in $k$ 
(those not in $k$ cannot do a transition whose label contains $k$).
Secondly, using recursively \ruleref{rule:tr-multicast-commit} we single
out every thread in $k$. Finally, we apply \ruleref{rule:tr-commit}
provided that every thread is ready to commit, i.e., it is of the form 
$\ctmthreadtr{\textcode{return\;$M$}}{N}{t}{k}$.

Aborting a transaction
works similarly, but it based on vetoes instead of an unanimous vote.
Aborts are triggered by unhandled exceptions raised by some thread, but threads react to this situation
in different ways:
\begin{itemize}[noitemsep]
	\item 
		threads in the same tree of the thread rasing the exception have been forked within the transaction; hence, the root thread is aborted and all other threads in the tree are killed because their creation, as for any transactional side-effect, have to be discarded;
	\item 
		threads in different trees joined the transaction after it was created, due to a merging; hence, these threads just retry their transaction, since aborting
		would require them to handle exceptions raised by ``foreign'' threads.
\end{itemize}

Notice that there are no derivation rules for $\textcode{retry}$, since its meaning is
to inform the scheduler that the execution is stuck; hence the machine has to re-execute the transaction from the beginning
(or a suitable check-point), following a different execution order, if and when possible.

\subsection{Opacity}  
In this section we use the formalisation of \libOTM to prove that it meets the \emph{opacity} criterion.

The opacity correctness criterion for transactional memory \cite{gk:ppopp08}
is an extension of the classical \emph{serialisability property} for databases
with the additional requirement that even non-committed transactions must
access consistent states. Intuitively, this property ensures that:
\begin{enumerate*}[label=\em(\alph{*})]
\item effects of any committed transaction appear performed at a single, indivisible point during the transaction lifetime,
\item effects of any aborted transaction cannot be seen by any other transaction, and
\item transactions always access consistent states of the system.
\end{enumerate*}

In order to formally capture these intuitive requirements let us recall some notions from \cite{gk:ppopp08}.
A \emph{history} is a sequence of \texttt{read}, \texttt{write}, \texttt{commit}, and \texttt{abort} operations\footnote{The definition given in \cite{gk:ppopp08} considers finer-grained events; in particular, \texttt{read} and \texttt{write} operations are formed by \texttt{request}, \texttt{execution}, and \texttt{response} events. However in \textit{loc.~cit.}~the authors restrict to histories where \texttt{request}-\texttt{execution}-\texttt{response} sequences are not interleaved, hence we can consider the simpler \texttt{read}/\texttt{write}s events in the first place.}
ordered according to the time at which they were issued (simultaneous events are arbitrarily ordered) and such that no operation can be issued by a transaction that has already performed a \texttt{commit} or an \texttt{abort}.
A transaction $k$ is said to be in a history $H$ if the latter contains at least one operation issued by $k$.
Any history $H$ defines a \emph{happens-before} partial order $\prec_H$ where
$k \prec_H k'$ iff the transaction $k$ becomes committed or aborted in $H$ before $k'$ issues its first operation.
If $\prec_H$ is total then $H$ is called \emph{sequential}.
For a history $H$, let $\mathit{complete}(H)$ be the set of histories obtained by adding either a commit or an abort for every live transaction in $H$.

We are now able to recall Guerraoui and Kapałka's definition\footnote{%
	The original definition requires the history $H$ to be also ``legal'',
	but this notion is relevant only in presence of non-transactional
	operations which both \libSTM and \libOTM prevent by design.
} of opacity.
\begin{definition}[\hspace{-.1ex}{\hspace{.1ex}\cite[Def.~1]{gk:ppopp08}}]
	A history $H$ is said to be \emph{opaque} if there is a sequential
	history $S$ equivalent to some history in the set $\mathit{complete}(H)$ such that ${\prec_S} \subseteq {\prec_H}$.
\end{definition}

As shown in \cite{gk:ppopp08}, opacity corresponds to the absence of mutual dependencies between live transactions, where a dependency is created whenever a transaction reads an information written by another or depends from its outcome.
\begin{definition}[{Opacity graph \cite[Sec.~5.4]{gk:ppopp08}}]
	For a history $H$ 
	let $\ll$ be a total order on the set $T$ of all transactions in $H$.
	An \emph{opacity graph} for $H$ and $\ll$, $OPG(H,\ll)$, is a bi-coloured directed graph on $T$ such that a vertex is \emph{red} if the corresponding transaction is either running or aborted, it is \emph{black} otherwise, and such that there is an edge from $k$ to $k'$ whenever any of the following holds:
	\begin{enumerate}[label=\it(\alph{*}),noitemsep]
		\item
			$k'$ happens-before $k$;
		\item
			$k$ reads something written by $k'$;
		\item	
			$k'$ reads some location written by $k$ and $k' \ll k$;
		\item
			$k'$ is neither running nor aborted and there are a location
			$r$ and a transaction $k''$ such that $k' \ll k''$, $k'$ writes to $r$, and $k''$ reads $r$ from $k$.
	\end{enumerate}	
	The edge is red if the second case applies otherwise it is black.
	If all edges from red nodes in $OPG(H,\ll)$ are also red then
	the graph is said to be \emph{well-formed}.
\end{definition}

Let $H$ be a history and let $k$ be a transaction appearing in it. A \texttt{read} operation by $k$ is said to be \emph{local} (to $k$) whenever the previous operation by $k$ on the same location was a \texttt{write}. A \texttt{write} operation by $k$ is said to be \emph{local} (to $k$) whenever the next operation by $k$ on the same location is a \texttt{write}.
We denote by $\mathit{nonlocal}(H)$ the longest sub-history of $H$ without any local operations. A history $H$ is said \emph{locally-consistent} if every local \texttt{read} is preceded by a \texttt{write} operation that writes the red value; it is said \emph{consistent} if, additionally, whenever some $k$ reads $v$ from $r$ in $\mathit{nonlocal}(H)$ then
some $k'$ writes $v$ to $r$ in $\mathit{nonlocal}(H)$.

\begin{theorem}[\hspace{-.1ex}{\hspace{.1ex}\cite[Thm.~2]{gk:ppopp08}}]
	A history $H$ is opaque if and only if
	\begin{enumerate*}[label=\it(\alph{*})]
		\item
			$H$ is consistent and
		\item
			there exists a total order $\ll$ on the set of transactions
			in $H$ such that $OPG(\mathit{nonlocal}(H),\ll)$ is well-formed and acyclic.
	\end{enumerate*}
\end{theorem}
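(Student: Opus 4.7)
The plan is to prove the ``if and only if'' by the standard two passes, leveraging the structural characterisation of opacity through the graph $OPG(\mathit{nonlocal}(H),\ll)$ that strips away reads/writes overwritten or re-read within the same transaction.

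For the \emph{only if} direction, suppose $H$ is opaque, so there exists $H' \in \mathit{complete}(H)$ and a sequential $S \equiv H'$ with ${\prec_S} \subseteq {\prec_H}$. I would first extract consistency of $H$: in $S$ each non-local \texttt{read} must read the value written by the temporally preceding non-local \texttt{write} on the same location (since $S$ is sequential and equivalent to $H'$); this property transfers back to $H$ and $\mathit{nonlocal}(H)$ because equivalence preserves the per-transaction event order and the returned values. Next, take $\ll$ to be the order induced by $S$ on the transactions, and verify by inspection of the four edge-creation clauses of $OPG$ that every edge from $k$ to $k'$ in $OPG(\mathit{nonlocal}(H),\ll)$ points ``backwards'' in $\ll$, which both forbids cycles and turns the graph into a DAG compatible with $\ll$. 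Well-formedness is the subtler part: if $k$ is red (running or aborted in $H$, hence aborted in $H'$), then clauses $(a)$, $(c)$ and $(d)$ cannot fire as sources from $k$, because each would force some other transaction to have observed $k$'s state in a way forbidden by $H'$ being a completion in which $k$ aborts; only clause $(b)$ may remain, and that produces a red edge.

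For the \emph{if} direction, start from $\ll$ with $OPG(\mathit{nonlocal}(H),\ll)$ well-formed and acyclic. I would first build a completion $H' \in \mathit{complete}(H)$: abort exactly those live transactions that correspond to red nodes, and commit the rest. Well-formedness guarantees this is coherent, since no committed transaction can depend (through a non-red edge) on an aborted one. Then I would construct $S$ by taking the transactions of $H'$ in $\ll$-order and emitting their operation sequences one after another (this is where the acyclicity-with-$\ll$ is essential to get a sequential arrangement that respects $H$'s happens-before). The two verifications left are that $S \equiv H'$ and that ${\prec_S} \subseteq {\prec_H}$. The latter is immediate from the inclusion of clause $(a)$ in the definition of $OPG$: any happens-before edge of $H$ appears as an edge of $OPG$, and our $S$ respects $\ll$, which in turn dominates all such edges.

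The genuinely delicate step, and the one I expect to be the main obstacle, is proving $S \equiv H'$, specifically that every non-local \texttt{read} in $S$ returns the same value as in $H'$. Consistency of $H$ gives, for each non-local read of $v$ from $r$ by some transaction $k$, the existence of some non-local write of $v$ to $r$ by another transaction $k'$; I need to argue that in $S$ the \emph{latest} preceding write on $r$ is indeed from $k'$ and not from some third transaction $k''$ that happens to interpose in $\ll$. This is exactly where clauses $(c)$ and $(d)$ of $OPG$ do the work: they force any such $k''$ to sit on the ``other side'' of $k'$ in $\ll$, so no interposition is possible. Once this is argued carefully for non-local operations, local reads and writes of each transaction are filled back in using the per-transaction program order (they are equivalent in $H'$ and $S$ by construction), completing the equivalence.
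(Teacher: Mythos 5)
You should first be aware that the paper offers no proof of this statement to compare against: it is Guerraoui and Kapa{\l}ka's Theorem~2, recalled with a citation, and the surrounding notions (opacity, the opacity graph, well-formedness, even the inclusion $\prec_S \subseteq \prec_H$) are loose paraphrases of the originals. So the only meaningful benchmark is the original proof in \cite{gk:ppopp08}, whose two-pass architecture your sketch does share; but two load-bearing steps in your sketch are incorrect as stated.

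In the \emph{if} direction you build $S$ by emitting the transactions in $\ll$-order and claim real-time order is respected because ``$S$ respects $\ll$, which dominates the happens-before edges''. Nothing in the hypotheses makes the edges of $OPG(\mathit{nonlocal}(H),\ll)$ consistent with $\ll$: the order $\ll$ is only a version order and enters only clauses \textit{(c)} and \textit{(d)}, while clause-\textit{(a)} and clause-\textit{(b)} edges may point forwards with respect to $\ll$ without creating a cycle. Concretely, take two committed transactions on disjoint locations with $k_1$ finishing before $k_2$ starts and choose $\ll$ with $k_2 \ll k_1$: the graph has a single edge and no red node, so it is acyclic and well-formed, yet your $S$ in $\ll$-order violates the real-time order. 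What acyclicity actually buys is a topological order of the graph, and it is from that order (not $\ll$) that $S$ must be built; your delicate ``no interposition'' argument for read values then has to be redone relative to that order. In the \emph{only if} direction, the claim that for a red $k$ clauses \textit{(a)}, \textit{(c)}, \textit{(d)} ``cannot fire as sources'' is simply false: an aborted or live $k$ can have a committed $k'$ that happens-before it (clause \textit{(a)}), and it can have written a location that some $k'$ reads from a third, committed writer (clause \textit{(c)}); neither involves anyone observing $k$'s tentative state, yet both yield non-red edges out of a red node under the definition as printed here. To make well-formedness provable you must work with the precise labelled formulation of \cite{gk:ppopp08} rather than the paraphrase above; note also that the inclusion $\prec_S \subseteq \prec_H$ as printed is unsatisfiable for a sequential $S$ when $\prec_H$ is genuinely partial (the original condition is $\prec_H \subseteq \prec_S$, which is in fact what your if-direction tacitly argues). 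So the architecture is right, but the two key verifications --- well-formedness of the constructed graph, and value-correctness of the constructed $S$ --- are not yet proofs.
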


In \cite{gk:ppopp08} transactions may encapsulate several threads but cannot be merged. Therefore, in order to study opacity of \libOTM we extend the set of operations considered in loc.~cit.~with explicit merges.
Let $k,k'$ be two running transactions in the given history; when they merge, they share their threads, locations, and effects. From this perspective, $k$ is commit-pending and depends from $k'$ and hence in the opacity graph, $k$ is a red node connected to $k'$ by a red edge. Hence, merges can be equivalently expressed at the history level by sequences like:\\[1ex]
\begin{enumerate*}[label=\it(\arabic{*}),itemsep=-2pt]
	\item new $x$;
	\item $k'$ writes on $x$;
	\item $k$ reads from $x$;
	\item $k$ prepares to commit.
\end{enumerate*}\\[1ex]
These are the only dependencies found in histories generated by \libOTM.
\begin{theorem}\label{th:noloops}
	For $H$ a history describing an execution of a \libOTM program
	and a total order $\ll$, $OPG(\mathit{nonlocal}(H),\ll)$ is a forest of red
	edges where only roots may be white.
\end{theorem}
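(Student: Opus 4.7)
The plan is to proceed by induction on the length of $H$, showing that each operation appended to the history preserves the claimed structure of the red-edge subgraph of $OPG(\mathit{nonlocal}(H),\ll)$.

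First I would exploit the observation of the preceding remark: in \libOTM, the only way a thread tagged with transaction $k$ can observe a value produced by a different live transaction $j$ is through rules \ruleref{rule:read-hit} and \ruleref{rule:write-hit}, which immediately merge the two transactions via the renaming $[k\mapsto j]$. After the renaming, every subsequent operation of the old $k$ is tagged with $j$, so $k$ can no longer appear as the source of an inter-transaction access in the history. Under the explicit history-level encoding of merges given in the remark, this means that each transaction $k$ contributes at most one outgoing red edge, namely the one that records its absorption into a group.

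With this characterisation in hand, the forest structure follows by a union-find style argument. Initially the red-edge subgraph is empty. Every merge step adds a single red edge attaching either a previously unrelated transaction, or the root of a previously distinct tree when two merged groups coalesce, to an existing tree. Because the attaching root becomes an internal node and is never again the source of a red edge, cycles cannot arise and the connected components remain trees. The colour constraint then follows similarly: an outgoing red edge witnesses that its source has entered a merged group and is therefore commit-pending. By rule \ruleref{rule:tr-multicast-commit} the whole group resolves synchronously, and under the history perspective of the remark (where a joining transaction only ``prepares to commit'') the non-root transactions are never recorded as individually committed --- they remain red, while only the representative root of each tree may switch to the other colour.

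The main obstacle I anticipate is the bookkeeping needed to bridge the operational semantics --- where a merge is an internal renaming on $\Sigma_\Delta$ and thread tags --- with the history-level encoding, where a merge is rendered as the four-step pattern new/write/read/prepare-to-commit of the remark. Setting this correspondence up as an auxiliary lemma, essentially showing that every merge step in an execution trace extends the history by exactly the pattern of the remark and that no other cross-transaction dependency is introduced by rules \ruleref{rule:read-miss}, \ruleref{rule:write-miss} or commit, is the main technical burden; once it is in place, the structural arguments above reduce to a straightforward induction on the history length.
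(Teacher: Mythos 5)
Your proposal is correct and takes essentially the same route as the paper's own proof: both rest on the observations that the claiming discipline makes merges (the read-hit/write-hit rules) the only source of OPG edges, that each transaction is absorbed---and hence contributes an outgoing edge---at most once, and that such edges and their sources are red because absorbed transactions are never individually committed, only the surviving root is. The only difference is packaging: you organise these facts as an induction on the history with a union-find flavour and make explicit the semantics-to-history bridging lemma that the paper dispatches with ``by inspection of the rules.''
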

\begin{proof}
By inspection of the rules it is easy to see that
\begin{enumerate*}[label=\it(\alph{*})]
\item
transactions may access only locations they claimed;
\item
claimed locations are released only on \texttt{commit}s, \texttt{abort}s and retries;
\item
transactions have to merge with any transaction holding a location they need.
\end{enumerate*}
Therefore, at any time there is at most one running transaction issuing operations on a given location, hence~\texttt{read}s and \texttt{write}s do not create edges.
Thus edges are created only during the execution of merges and, by inspecting the above implementation, it easy to see that
\begin{enumerate*}[label=\it(\alph{*}), resume]
\item
any transaction can issue at most one merge;
\item
a transaction issuing a merge is a red node;
\item
the edge created by a merge is red.
\end{enumerate*}
Therefore, transactions form a forest made of red edges where any non-root node is red.
\end{proof}

\begin{corollary}[Opacity]
	\libOTM meets the opacity criterion.
\end{corollary}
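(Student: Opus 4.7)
The plan is to invoke Guerraoui and Kapałka's characterisation theorem (Thm.~2 of \cite{gk:ppopp08}): a history is opaque iff it is consistent and there exists a total order on its transactions whose opacity graph is well-formed and acyclic. \Cref{th:noloops} already does most of the work, so the proof reduces to two observations plus a verification of consistency.

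First I would note that \Cref{th:noloops} gives the second clause of the characterisation for \emph{any} choice of total order $\ll$. Acyclicity is immediate because a forest contains no cycles. Well-formedness, which demands that every edge leaving a red node be red, is also immediate: the theorem states that every edge in $OPG(\mathit{nonlocal}(H),\ll)$ is red, so in particular all edges originating at red nodes are red. So the only outstanding obligation is to show that every history $H$ generated by an \libOTM execution is consistent.

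For consistency, I would argue by inspection of the rules in \Cref{fig:semantics-tau,fig:semantics-trs-mgr}. Local consistency is immediate from the shape of the working-memory updates: rules \ruleref{rule:write-miss} and \ruleref{rule:write-hit} record the written value in $\Sigma_\Delta$ tagged by the current (possibly merged) transaction, and the only rule that can serve a subsequent read by the same transaction on the same location, \ruleref{rule:read-hit}, returns precisely the stored value. For the non-local part, any \texttt{read} appearing in $\mathit{nonlocal}(H)$ is served either by \ruleref{rule:read-miss} from $\Sigma_\Theta$ or by \ruleref{rule:read-hit} from $\Sigma_\Delta$ after a merge; in the former case the value on the heap is exactly the one committed by the $\mathsf{commit}$ function in \ruleref{rule:tr-commit} of some earlier transaction $k'$, whose corresponding \texttt{write} on that location is necessarily non-local (no later write by $k'$ to that location exists, because $k'$ has already committed); in the latter case the merged group is a single transaction, and the read either matches a \texttt{write} issued in $\mathit{nonlocal}(H)$ by the merged group or becomes local and is stripped out together with its matching \texttt{write}. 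In both cases the matching non-local \texttt{write} required by consistency exists.

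Combining the two observations with Thm.~2 of \cite{gk:ppopp08}, every \libOTM history is opaque, which is the claim of the corollary. The delicate step I expect to be the main obstacle is the bookkeeping of merges inside $\mathit{nonlocal}(H)$: since Guerraoui and Kapałka's framework does not accommodate merges natively, one has to check that the encoding of a merge as the sequence ``new $x$; $k'$ writes $x$; $k$ reads $x$; $k$ prepares to commit'' faithfully reflects what the operational semantics does under \ruleref{rule:read-hit} and \ruleref{rule:write-hit}, and in particular that renaming all threads of the merged group to a single transaction identifier (implicit in $\mathsf{transactions}(P)$ and in the $[k\mapsto j]$ substitutions) does not create spurious non-local reads without matching writes. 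Once that correspondence is made precise, the rest of the argument is a direct application of the external theorem.
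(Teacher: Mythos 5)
Your proposal follows essentially the same route as the paper: the paper's own proof of the corollary is exactly the observation that, by \Cref{th:noloops}, the opacity graph is a forest of red edges and hence acyclic and well-formed, so the characterisation theorem of Guerraoui and Kapałka applies. Your additional verification of the consistency clause (which the paper leaves implicit) is correct and in the same spirit, so the proposal is sound and matches the paper's argument.
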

\begin{proof}
	A forest formed by red edges whose sources are always red is acyclic and well-formed.
\end{proof}

\section{Conclusions}
\label{sec:conclusions}
In this paper we have presented OTM, a programming model supporting  interactions between composable memory transactions. This model separates isolated transactions from non-isolated ones, still guaranteeing atomicity; the latter can interact by accessing to shared variables. Consistency is ensured by transparently \emph{merging} interacting transactions at runtime.  
We have showed the versatility and simplicity of OTM by implementing some examples which are incompatible with isolation, and we have given a formal semantics for OTM, which allowed us to prove that this model satisfies the opacity criterion.

There are two main directions for future work each posing its own challenges. First, like \libSTM, this model supports nesting (via \textcode{orElse}); however, this feature is currently limited to isolated (sub)transactions. Supporting nesting of open transaction requires additional care in the handling of side-effects: is merging transactions at different level of nesting feasible and meaningful or are we breaking the intuition behind the programming model?
Secondly, an implementation is due in order to validate experimentally the model. A possible approach is to implement \libOTM completely in Haskell on top of \libSTM. This solution does not need any specific support from the Haskell RunTime (HRT) but cannot benefit of the performance gains offered by a deeper integration, thus hindering any fair comparison with existing TM models, like \libSTM. On the other hand, integrating \libOTM with the HRT and the Glasgow Haskell Compiler, akin \libSTM, would be more efficient but also more complex and invasive. 

\looseness=-1
We have presented OTM within Haskell (especially to leverage its type system), but this model is general and can be applied to other languages. A possible future work is to port this model to an imperative object oriented language, such as Java or C++; however, like other TM implementations, we expect that this extension will require some changes in the compiler and/or the runtime. 

This work builds on the ideas in \cite{mpt:coord15} where we described an abstract calculus with shared memory and open transactions. In \textit{loc.~cit.}~we showed how this model is expressive enough to represent  $TCCS^m$ \cite{ksh:fossacs2014}, a variant of the Calculus of Communicating Systems with transactional synchronization.
Being based on CCS, communication in $TCCS^m$ is synchronous;
however, nowadays asynchronous models play an important r\^ole (see
e.g.~actors, event-driven programming, etc.).  It may be interesting
to generalize the discussion so as to consider also this case, e.g.~by
defining an actor-based calculus with open transactions.  Such a
calculus can be quite useful also for modelling speculative reasoning
for cooperating systems \cite{ma2010:speculative, mmp:dais14, mmp:eceast2014, mpm:gcm14w, mp:memo14}.
A local version of actor-based open transactions can be
implemented in \libOTM using lock-free data
structures (e.g., message queues) in shared transactional memory. 

\paragraph{Acknowledgements}
We thank Nicola Gigante and Valentino Picotti for their valuable feedback about the \libOTM programming model.

\end{document}